\documentclass[twocolumn ,aps,prd,a4paper,superscriptaddress,tightenlines,noeprint,10pt,floatfix,nobibnotes,notitlepage]{revtex4-2}
\RequirePackage{float}
\usepackage[normalem]{ulem}
\usepackage{subcaption}
\usepackage[utf8]{inputenc}
\usepackage[USenglish]{babel}
\usepackage[T1]{fontenc}
\usepackage{graphicx,epsf}
\usepackage[toc,page,header]{appendix}
\usepackage{minitoc}
\usepackage{tabularx}
\usepackage{color}
\usepackage[bookmarksnumbered,hypertexnames=false,colorlinks,colorlinks=true,linkcolor=blue,urlcolor=black,citecolor=blue,anchorcolor=green]{hyperref}
\usepackage{bbm,microtype,mathrsfs,amsmath,amssymb,color,amsthm,mathtools,fullpage,enumitem,ae,aecompl,bm,thm-restate}
\usepackage{physics}
\usepackage{tikz}
\usepackage{lipsum}
\usepackage{nameref}
\usepackage{color}
\usepackage{booktabs, caption, palatino}
\usepackage[capitalize]{cleveref}
\usepackage{silence}

\usepackage{mathtools}
\usepackage{braket}

\allowdisplaybreaks
\newcommand{\be}{\begin{equation}}
\newcommand{\ee}{\end{equation}}
\newcommand{\ba}{\begin{eqnarray}}
\newcommand{\ea}{\end{eqnarray}}

\newcommand{\half}{\frac{1}{2}}

\newcommand{\ignore}[1]{}

\newcommand{\ot}{\otimes}

\newcommand{\pauli}[1]{\mathbb{P}_{#1}}

\newcommand{\de}[0]{{\operatorname{d}}}

\def\CC{{\rm\kern.24em \vrule width.04em height1.46ex depth-.07ex
   \kern-.29em C}}
\def\P{{\rm I\kern-.25em P}}
\def\RR{{\rm
        \vrule width.04em height1.58ex depth-.0ex
        \kern-.04em R}}

\def\bbbc{{\mathchoice {\setbox0=\hbox{$\displaystyle\rm C$}\hbox{\hbox
to0pt{\kern0.4\wd0\vrule height0.9\ht0\hss}\box0}}
{\setbox0=\hbox{$\textstyle\rm C$}\hbox{\hbox
to0pt{\kern0.4\wd0\vrule height0.9\ht0\hss}\box0}}
{\setbox0=\hbox{$\scriptstyle\rm C$}\hbox{\hbox
to0pt{\kern0.4\wd0\vrule height0.9\ht0\hss}\box0}}
{\setbox0=\hbox{$\scriptscriptstyle\rm C$}\hbox{\hbox
to0pt{\kern0.4\wd0\vrule height0.9\ht0\hss}\box0}}}}
\def\bbbz{{\mathchoice {\hbox{$\sf\textstyle Z\kern-0.4em Z$}}
{\hbox{$\sf\textstyle Z\kern-0.4em Z$}}
{\hbox{$\sf\scriptstyle Z\kern-0.3em Z$}}
{\hbox{$\sf\scriptscriptstyle Z\kern-0.2em Z$}}}}

\newlength{\fighskip} \fighskip=2pt
\newlength{\figvskip} \figvskip=1pt

\makeatletter
\def\namedlabel#1#2{\begingroup
   \def\@currentlabel{#2}%
   \label{#1}\endgroup
}
\makeatother
\usepackage{hyperref}
\newtheorem{theorem}{Theorem}

\newtheorem{lemma}{Lemma}
\newtheorem{definition}{Definition}

\definecolor{Large_Ellipse_Env}{HTML}{ECD9ED}
\definecolor{Small_Ellipse_Env}{HTML}{ECD9ED}
\definecolor{Large_Ellipse_Prb}{HTML}{FFB98D}
\definecolor{Small_Ellipse_Prb}{HTML}{FFE2A1}

\begin{document}
\setcounter{secnumdepth}{3}

\title{Induced Resource Theories and Harvesting via Quantum Probes}

\author{Ron Nystr\"om}
\affiliation{Department of Physics, P.O.Box 64, FIN-00014 University of Helsinki, Finland}

\author{Simone Cepollaro}
\affiliation{Scuola Superiore Meridionale, Largo S. Marcellino 10, 80138 Napoli, Italy}
\affiliation{INFN, Sezione di Napoli, Italy}

\author{Nicola Pranzini}
\affiliation{Department of Physics, P.O.Box 64, FIN-00014 University of Helsinki, Finland}
\affiliation{QTF Centre of Excellence, Department of Physics,
University of Helsinki, P.O. Box 43, FI-00014 Helsinki, Finland}
\affiliation{InstituteQ - the Finnish Quantum Institute, Finland}

\author{Stefano Cusumano}
\affiliation{INFN, Sezione di Napoli, Italy}
\affiliation{Dipartimento di Fisica `Ettore Pancini', Universit\`a degli Studi di Napoli Federico II,
Via Cintia 80126,  Napoli, Italy}

\author{Alioscia Hamma}
\affiliation{Scuola Superiore Meridionale, , Largo S. Marcellino 10, 80138 Napoli, Italy}
\affiliation{INFN, Sezione di Napoli, Italy}
\affiliation{Dipartimento di Fisica `Ettore Pancini', Universit\`a degli Studi di Napoli Federico II, Via Cintia 80126,  Napoli, Italy}

\author{Esko Keski-Vakkuri}
\affiliation{Department of Physics, P.O.Box 64, FIN-00014 University of Helsinki, Finland}
\affiliation{InstituteQ - the Finnish Quantum Institute, Finland}
\affiliation{Helsinki Institute of Physics, P.O.Box 64, FIN-00014 University of Helsinki, Finland}

\begin{abstract}
We consider scenarios in which a quantum system with a well-defined resource theory is used as a probe to interact with an environment, such as a quantum field, for which a resource-theoretic description is absent or incomplete. We clarify if and how the harvesting of a resource in the probe can tell us about the state of the environment. This is particularly ambiguous when the probe–environment interaction is not a free operation, or the concept of such free operations cannot be defined altogether. We propose a framework and precise conditions under which it becomes possible to interpret resource generation on the probe as evidence of resources in the environment, thereby introducing an effective notion of resources for the latter. Our results clarify in which sense resources can be said to be harvested from the environment and provide a systematic way to analyse such processes beyond fully controlled resource-theoretic settings. More generally, this work may provide a step towards a more general understanding of the interplay of different quantum resources. 
\end{abstract}

\maketitle
\section{Introduction}
\label{sec:introduction}

Quantum resource theories (QRTs) provide a framework to characterise and quantify non-classical features of quantum systems in terms of free operations and free states~\cite{ChitambarG19}. This has allowed the rigorous analysis of several quantum resources, ranging from quantum entanglement~\cite{RevModPhys.81.865} and coherence~\cite{PhysRevLett.116.120404,RevModPhys.89.041003}, to non-locality~\cite{de_vicente_nonlocality_2014}, non-stabilizerness~\cite{bravyi2005universal, veitch_resource_2014,PhysRevLett.128.050402}, anti-flatness~\cite{Tirrito2024, Jasser2026}, asymmetry~\cite{marvian_coherence_2020,PhysRevLett.129.190502,Zhou2021newperspectives} and thermodynamics~\cite{lostaglio_introductory_2019}. However, while well-established for finite-dimensional systems, the extension of these general notions to more complex settings, such as quantum fields, remains largely unexplored.

A prominent context where this issue arises is that of resource harvesting protocols~\cite{valentini1991non,reznik2003entanglement}. In these, localised 
probes interact with a quantum field and acquire non-classical properties such as entanglement~\cite{PhysRevD.103.016007,martin-martinez_entanglement_2014,PhysRevD.102.125026,PhysRevD.106.076002,Pozas-KerstjensEtAl15,PercheEtAl24}, non-stabilizerness and non-locality~\cite{9ph7-cyzh,4brj-cl26, YangEtAl2025, ZhangEtAl26} as well as contextuality~\cite{LeMaitre2026, LimaEtAl2025}. These phenomena are often interpreted as evidence that the corresponding resource was already present in the field~\cite{SUMMERS1985257,summers_maximal_1987} and has been ``harvested'' by the probes.

However, this interpretation is not always straightforward~\cite{PhysRevD.104.125005}. Besides the possible aforementioned absence of a resource theory of the corresponding resource for a quantum field, one could also find other constraints hindering the proper definition of a harvesting protocol. For instance, the probe–field interaction, constrained by some physical requirements, might not be a free operation. Furthermore, one might face the problem of interpolating between different resource theories, where a free state of a given resource theory might not be free in another~\cite{DenerisEtAl26}. As a consequence, the generation of a resource in the probe in these scenarios does not, by itself, imply the presence of that resource in the field. This raises a fundamental question: in what sense can resources be said to be harvested from an environment that lacks a well-defined resource theory?

In this work, we address this question by proposing a framework in which the resource theory of a probe system induces an effective notion of resources for the environment. Rather than assuming \textit{a priori} a resource theory for the environment, we infer it operationally through its action on the probe, mediated by some given interaction. We present a system-independent construction of such an induced resource theory, based on employing the well-defined resource theory on the probe to establish a classification of environment states which are free from the perspective of the probe, i.e., environment states that do not generate the resource in the probe via the given interaction.

Within this framework, we analyse different harvesting scenarios, distinguished by the level of control and prior knowledge about the interaction and the environment. This classification clarifies under which assumptions resource generation on the probe can be consistently interpreted as harvesting from the environment.

The manuscript is structured as follows. In Sec.~\ref{sec:induced_QRT} we introduce the notion of induced resource theory, that is, resource theories constrained upon given unitary operations and set of states, providing examples and ideas from which those definitions are inspired. Then in Sec.~\ref{sec:examples} we provide several examples of the notions introduced in Sec.~\ref{sec:induced_QRT}. Finally, in Sec.~\ref{sec:conclusions} we draw our conclusions and provide some outlook for further investigations.

\section{Induced QRT and harvesting}
\label{sec:induced_QRT}

We consider a composite system $\Psi = \mathcal{P}\cup \mathcal{E}$ consisting of a probe system $\mathcal{P}$ and an environment $\mathcal{E}$. The subsystems are initially prepared in states $\rho \in \mathcal{D}(\mathcal{H}_\mathcal{P})$ and $\sigma \in \mathcal{D}(\mathcal{H}_\mathcal{E})$, respectively, and allowed to interact via a unitary $V \in \mathcal{U}(\mathcal{H}_\Psi)$. This induces the channels
\begin{align}
    \Phi_\mathcal{P}^{(\sigma,V)}(\rho) &= \Tr_\mathcal{E}\!\left[V(\rho\otimes\sigma)V^\dag\right], \label{e.direct_channel} \\
    \Phi_\mathcal{E}^{(\rho,V)}(\sigma) &= \Tr_\mathcal{P}\!\left[V(\rho\otimes\sigma)V^\dag\right], \label{e.complementary_channel}
\end{align}
describing, respectively, the effect of the environment on the probe and the back-action of the probe on the environment. For simplicity, and without loss of generality we restrict to unitary interactions.

We assume that the probe $\mathcal{P}$ is equipped with a resource theory specified by a set of free states $\mathcal{F}_\mathcal{P} \subset \mathcal{D}(\mathcal{H}_\mathcal{P})$. These implicitly define a set of free operations $\mathcal{O}_{\cal P}$ as those operations mapping free states into free states. There may also exist resource theories defined for the environment $\mathcal{E}$ and the composite system $\Psi$. However, these can be difficult to characterise in general. This is particularly true for composite systems, where the resource theories of the probe and environment behave fundamentally differently, and therefore may not admit a well-defined unified description. Hence, we assume no \textit{a priori} resource theory for the environment $\mathcal{E}$ nor for the composite system $\Psi$. Our goal is to use the probe's resource theory and the interaction to induce an effective notion of resources on the environment.

\begin{figure*}[!ht]
    \centering
    \begin{tikzpicture}[>=stealth, thick]

% Left set
\draw[fill=blue!20] (-3,0) ellipse (2 and 2.5);
\node at (-3,-2.1) {$\mathcal{D}(\mathcal H_E)$};
%free set in E
\draw[fill=red!20] (-3,0.7) ellipse (0.9 and 1);
\node at (-3,0) {$S$};

\node[circle, fill=black, inner sep=2pt, label=left:{$\sigma$}] (e1) at (-3,0.7) {};
\node[circle, fill=black, inner sep=2pt, label=left:{$\sigma'$}] (e2) at (-3,-0.7) {};

% Right set
\draw[fill=Large_Ellipse_Prb] (3,0) ellipse (2 and 2.5);
\node at (3,-2.1) {$\mathcal{D}(\mathcal H_P)$};
%free set in P
\draw[fill=yellow!20] (3,1) ellipse (1.5 and 1.2);
\node at (2,0.5) {$\mathcal{F_P}$};

\node[circle, fill=black, inner sep=2pt] (p1) at (2,1.2) {};
\node[circle, fill=black, inner sep=2pt] (p2) at (4,1.2) {};
\node[circle, fill=black, inner sep=2pt] (p3) at (3,0.5) {};
\node[circle, fill=black, inner sep=2pt] (p4) at (4,-1.2) {};

% Pairwise arrows inside right set
\draw[->] (p1) -- (p2) node[pos=0.8, above] {$\Phi^{(\sigma,V)}_\mathcal{P}$};
\draw[->] (p3) -- (p4) node[pos=0.6, above, xshift=15pt, yshift=-2pt] {$\Phi^{(\sigma',V)}_\mathcal{P}$};

% Midpoints of arrows (targets)
\coordinate (m1) at (3,1.3);
\coordinate (m2) at (3.5,-0.5);

% Arrows from left points to arrows on the right
\draw[->] (e1) to[out=20, in=130] (m1);   % bends upward
\draw[->] (e2) to[out=-20, in=-130] (m2); % bends downward

\end{tikzpicture}
    \caption{Induced set of free states $S$: for a given $V$, the set $S\subseteq\mathcal{D(H_E)}$ contains states $\sigma$ for which $\Phi^{(\sigma,V)}_\mathcal{P}$ is a free operation of $(\mathcal{F_P},\mathcal{O_P})$.}
    \label{f.weak_induced_QRT}
\end{figure*}

\subsection{Weak and strong induced resource theories}
\label{s.w/s_QRT}

We begin by identifying those states of the environment that cannot generate resource in the probe.

\begin{definition} \label{d.S}
Let $V$ be a unitary operation describing the interaction between $\mathcal{P}$ and $\mathcal{E}$, and $(\mathcal{F}_\mathcal{P},\mathcal{O}_\mathcal{P})$ a resource theory on $\mathcal{P}$. A set $S \subseteq \mathcal{D}(\mathcal{H}_\mathcal{E})$ is called an \emph{induced set of free states} if
\begin{equation}
    \Phi_\mathcal{P}^{(\sigma,V)}(\rho) \in \mathcal{F}_\mathcal{P}
\end{equation}
for all $\sigma \in S$ and all $\rho \in \mathcal{F}_\mathcal{P}$.
\end{definition}

In other words, $S$ consists of environment states that, when interacting with any free probe state according to the unitary operation $V$, cannot generate the resource in the probe. This induces a notion of ``free'' states for the environment, defined relative to the interaction $V$ and the probe resource theory. At this point, the candidate induced free set $S$ and the interaction $V$ can be chosen quite generally and could be motivated by experimental or theoretical reasons. Later, for a given interaction $V$ we will be interested in what is the maximal induced free set, and denote it $S_\textrm{max}$.

Given such a set $S$, one may consider a set of quantum operations that preserve it~\footnote{Among all possible such operations, one usually selects only a subset, either due to physical constraints~\cite{ChitambarG19} or mathematical ones~\cite{DiazEtAl2025}}:
\begin{equation}
{\rm Self}(S) = 
\{\Phi: S \to \mathcal{D}(\mathcal{H}_\mathcal{E}) \;|\; \Phi(\sigma) \in S\}.
\end{equation}

\begin{definition} \label{d.weak_iQRT}
The triple $(S,V,\mathcal{F}_\mathcal{P})$ defines a \emph{weak induced resource theory} on $\mathcal{E}$ given by the pair $(S, {\rm Self}(S))$.
\end{definition}

The qualifier ``weak'' reflects the fact that no condition is imposed on the evolution of the environment under the interaction: only the effect on the probe is used to define $S$. Furthermore, the definition is agnostic about whether $\mathrm{Self}(S)$ turns out to be trivial (containing only the identity map).

A stronger notion is obtained by additionally requiring stability of $S$ under the back-action of the probe.

\begin{definition} \label{d.strong_iQRT}
The triple $(S,V,\mathcal{F}_\mathcal{P})$ defines a \emph{strong induced resource theory} if it satisfies Definition~\ref{d.weak_iQRT} and, in addition,
\begin{equation}
    \Phi_\mathcal{E}^{(\rho,V)}(\sigma) \in S
\end{equation}
for all $\sigma \in S$ and all $\rho \in \mathcal{F}_\mathcal{P}$.
\end{definition}

In this case, the interaction with free probe states preserves the set $S$, i.e., $\Phi_\mathcal{E}^{(\rho ,V)}\in \textrm{Self}(S) \ \forall \rho \in \mathcal{F}_\mathcal{P}$. In particular, this implies that the induced free states are stable under repeated probe–environment interactions.

The distinction between weak and strong induced resource theories becomes especially relevant when considering multiple interactions with the same environment. In the strong case, repeated interactions with fresh probes in free states cannot generate resource in the probe if the initial environment state lies in $S$. In contrast, this stability is not guaranteed for weak induced resource theories: resource generation may occur after one or many subsequent interactions even if it does not occur in a single step. This situation is relevant, for instance, when probing a large or uncontrollable environment through successive interactions with fresh probes, as in collision models~\cite{CiccarelloEtAl22,e24091258,CattaneoEtAl22}. In typical harvesting scenarios involving quantum fields, however, one considers single probe–environment interactions, and the weak notion is often sufficient.

\subsection{Operational interpretation of harvesting}
\label{s.Harvesting_classification}

We now revisit the notion of resource harvesting in the light of the induced resource theories introduced above. In general, harvesting refers to the generation of resources in the probe after the interaction with the environment, i.e.,
\begin{equation}
    r_\mathcal{P}\!\left( \Phi_\mathcal{P}^{(\sigma,V)}(\rho) \right) > 0,
\end{equation}
for some initial probe state $\rho \in \mathcal{F}_\mathcal{P}$, environment state $\sigma \in \mathcal{D}(\mathcal{H}_\mathcal{E})$ and a resource monotone $r_\mathcal{P}:\mathcal{D}(\mathcal{H}_\mathcal{P})\rightarrow\mathbb{R}_+$ for the resource theory of the probe. Fig.~\ref{f.weak_induced_QRT} illustrates how the induced resource theory classifies which environment states facilitate harvesting of a resource in the probe. 

In the case of a strong induced resource theory, defined by the tuple $(S,V,\mathcal{F}_\mathcal{P})$, the set $S$ is stable under the back-action of the probe, and repeated interactions with free probe states cannot generate resource in the probe if the environment is initially in $S$. In this case, observing resource generation in the probe implies that the initial state of the environment does not belong to $S$, and harvesting can be interpreted as witnessing environmental resources relative to this induced structure (see Sec.~\ref{s.H_as_W}).

If instead only a weak induced resource theory is available, the set $S$ is defined solely through the effect of the environment on the probe, without imposing stability under the interaction. In this case, observing resource generation in a single interaction still implies that the corresponding state $\sigma$ does not belong to $S$. However, this conclusion is not stable under repeated interactions: resource generation may occur after multiple interactions with free probe states even if it does not occur in a single step. As a result, the interpretation of harvesting as witnessing environmental resources is more limited and depends on the specific probing protocol.

Finally, depending on the choice of the interaction $V$, the presence of resource generation in the probe does not necessarily imply that the environment itself possesses a non-zero amount of that resource. In such cases, harvesting becomes a purely phenomenological feature of the probe–environment dynamics, lacking a direct interpretation in resource-theoretic terms for the environment.

\subsection{Induced free states and harvesting as a witness}
\label{s.H_as_W}
Given a probe resource theory $(\mathcal{F}_\mathcal{P},\mathcal{O}_\mathcal{P})$ and an interaction $V$, a natural object is the maximal set of environmental states that do not generate resource in the probe:

\begin{definition}
The  maximal induced set of free states is
\begin{equation}
S_{\max} := \left\{ \sigma \in \mathcal{D}(\mathcal{H}_\mathcal{E}) ~\middle|~ 
\Phi^{(\sigma,V)}_\mathcal{P}(\rho) \in \mathcal{F}_\mathcal{P}
\ \forall \rho \in \mathcal{F}_\mathcal{P}
\right\}.
\end{equation}
\end{definition}

By construction, $S_{\max}$ depends on $\mathcal{F_P}$ and $V$, and contains all environmental states that are \emph{indistinguishable from free states} when probed through $V$ using free probe states. In particular, any induced resource theory $(S,V,\mathcal{F}_\mathcal{P})$ satisfies $S \subseteq S_{\max}$. Additionally, $S_\text{max}$ inherits some properties of $\mathcal{F_P}$. Namely, if $\mathcal{F_P}$ is convex, then $S_\text{max}$ is also convex~\footnote{The proof consists of considering two states $\sigma,\sigma' \in S_\text{max}$ and a generic linear combination $\xi:=p\sigma+(1-p)\sigma'$ with $p \in [0,1]$. One has for $\rho \in \mathcal{F_P}$ that $\Phi_\mathcal{P}^{(\xi,V)}(\rho) = p \Phi_\mathcal{P}^{(\sigma,V)}(\rho) + (1-p) \Phi_\mathcal{P}^{(\sigma',V)}(\rho)$. Since $\sigma,\sigma' \in S_\text{max}$, it follows that $\Phi_\mathcal{P}^{(\sigma,V)}(\rho), \Phi_\mathcal{P}^{(\sigma',V)}(\rho) \in \mathcal{F_P}$. If $\mathcal{F_P}$ is convex, then $\Phi_\mathcal{P}^{(\xi,V)}(\rho) \in \mathcal{F_P}$. Thus, $\xi \in S_\text{max}$ and the entire set must be convex. Note that convexity of $S_\text{max}$ does not imply that the complete induced resource theory $(S_\text{max},\text{Self}(S_\text{max}))$ is convex~\cite{ChitambarG19}.}, which is useful when considering convex QRTs, such as coherence and entanglement~\cite{ChitambarG19, RegulaB2018, BrandaoEtAl2015, NaseriEtAl26}.

The operational relevance of $S_{\max}$ is immediate: whenever an initially free probe acquires resource after the interaction, the environmental state must lie outside $S_{\max}$.

\begin{lemma}
If there exist $\rho \in \mathcal{F}_\mathcal{P}$ such that
\begin{equation}
r_\mathcal{P}\!\left(\Phi^{(\sigma,V)}_\mathcal{P}(\rho)\right) > 0,
\end{equation}
then $\sigma \notin S_{\max}$.
\end{lemma}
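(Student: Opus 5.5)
The argument is a contrapositive that combines the definition of $S_{\max}$ with the faithfulness property of the monotone $r_\mathcal{P}$. The only implicit ingredient is the standard requirement that $r_\mathcal{P}$ vanishes on free states: $r_\mathcal{P}(\tau)=0$ for every $\tau\in\mathcal{F}_\mathcal{P}$. Equivalently, $r_\mathcal{P}(\tau)>0$ implies $\tau\notin\mathcal{F}_\mathcal{P}$. We do not need full faithfulness ($r_\mathcal{P}(\tau)=0$ iff $\tau$ is free), only this one direction.

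First, we suppose for contradiction that $\sigma\in S_{\max}$. By the definition of $S_{\max}$, this means $\Phi^{(\sigma,V)}_\mathcal{P}(\rho')\in\mathcal{F}_\mathcal{P}$ for every $\rho'\in\mathcal{F}_\mathcal{P}$. In particular, this holds for the specific free state $\rho$ given in the hypothesis of the Lemma, so $\tau:=\Phi^{(\sigma,V)}_\mathcal{P}(\rho)\in\mathcal{F}_\mathcal{P}$.

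Second, since $r_\mathcal{P}$ vanishes on free states, we get $r_\mathcal{P}(\tau)=0$. This contradicts the assumption $r_\mathcal{P}(\tau)>0$, so $\sigma\notin S_{\max}$.

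There is no real obstacle here. The one point to state explicitly is the convention that a resource monotone is zero on $\mathcal{F}_\mathcal{P}$, which the paper uses implicitly when it defines harvesting as $r_\mathcal{P}>0$. If one wanted to avoid committing to a particular monotone, the same argument shows directly that $\Phi^{(\sigma,V)}_\mathcal{P}(\rho)\notin\mathcal{F}_\mathcal{P}$ forces $\sigma\notin S_{\max}$. It also shows that $\sigma\notin S$ for any induced set of free states $S$, since $S\subseteq S_{\max}$.
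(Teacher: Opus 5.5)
Your proof is correct and is essentially the paper's argument: the paper only remarks that the Lemma "follows directly from the definition of $S_{\max}$", which is exactly the contrapositive you write out. Stating explicitly that $r_\mathcal{P}$ must vanish on $\mathcal{F}_\mathcal{P}$ is a useful clarification, since the paper relies on this assumption without saying so.
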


While this statement follows directly from the definition of $S_{\max}$, it highlights its operational meaning: $S_{\max}$ characterises precisely those states of the environment that cannot be detected as resourceful through the chosen interaction and probe.

This observation becomes non-trivial when a resource theory $(\mathcal{F}_\mathcal{E},\mathcal{O}_\mathcal{E})$ on the environment is available. In that case, comparing $\mathcal{F}_\mathcal{E}$ with $S_{\max}$ determines whether harvesting can be used as a \emph{resource witness}.

\begin{theorem} \label{t.FE_subset_S}
If $\mathcal{F}_\mathcal{E} \subseteq S_{\max}$, then any generation of resource in the probe implies that the environmental state is resourceful, i.e.,
\begin{equation}
r_\mathcal{P}\!\left(\Phi^{(\sigma,V)}_\mathcal{P}(\rho)\right) > 0
\quad \Rightarrow \quad
\sigma \notin \mathcal{F}_\mathcal{E}.
\end{equation}
\end{theorem}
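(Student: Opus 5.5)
The plan is a contrapositive argument that chains the preceding Lemma with the inclusion $\mathcal{F}_\mathcal{E} \subseteq S_{\max}$. Fix $\sigma \in \mathcal{D}(\mathcal{H}_\mathcal{E})$, and assume there is some $\rho \in \mathcal{F}_\mathcal{P}$ with $r_\mathcal{P}\big(\Phi^{(\sigma,V)}_\mathcal{P}(\rho)\big) > 0$. This is understood in the same sense as in the Lemma, where $\rho$ is an initially free probe state.

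The first step invokes the Lemma directly: it gives $\sigma \notin S_{\max}$. If one wants to see this from first principles, the key fact is that a resource monotone vanishes on free states, so $r_\mathcal{P}(\tau)=0$ for all $\tau \in \mathcal{F}_\mathcal{P}$. Therefore $\Phi^{(\sigma,V)}_\mathcal{P}(\rho) \notin \mathcal{F}_\mathcal{P}$ for this free $\rho$. This violates the defining condition of $S_{\max}$, which requires $\Phi^{(\sigma,V)}_\mathcal{P}(\rho)\in\mathcal{F}_\mathcal{P}$ for all free $\rho$.

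The second step uses the hypothesis. Suppose, for contradiction, that $\sigma \in \mathcal{F}_\mathcal{E}$. Then $\mathcal{F}_\mathcal{E} \subseteq S_{\max}$ gives $\sigma \in S_{\max}$, contradicting the first step. Hence $\sigma \notin \mathcal{F}_\mathcal{E}$; equivalently, $\mathcal{D}(\mathcal{H}_\mathcal{E})\setminus S_{\max} \subseteq \mathcal{D}(\mathcal{H}_\mathcal{E})\setminus\mathcal{F}_\mathcal{E}$.

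There is essentially no obstacle here. The only point needing care is the standing convention that $r_\mathcal{P}$ is faithful at least in the direction $\tau\in\mathcal{F}_\mathcal{P}\Rightarrow r_\mathcal{P}(\tau)=0$. This is already implicit in the Lemma, and I would state it explicitly. Faithfulness in the other direction is not needed.
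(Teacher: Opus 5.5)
Your proof is correct and follows essentially the same contrapositive argument as the paper: the inclusion $\mathcal{F}_\mathcal{E}\subseteq S_{\max}$ places any free $\sigma$ in $S_{\max}$, so free probe inputs stay free and no resource is generated. Your explicit remarks that $\rho$ must be taken in $\mathcal{F}_\mathcal{P}$ and that $r_\mathcal{P}$ must vanish on $\mathcal{F}_\mathcal{P}$ are assumptions the paper leaves implicit, and stating them is a worthwhile addition.
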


\begin{proof}
If $\sigma \in \mathcal{F}_\mathcal{E}$, then by assumption $\sigma \in S_{\max}$, which implies $\Phi^{(\sigma,V)}_\mathcal{P}(\rho) \in \mathcal{F}_\mathcal{P}$ for all $\rho \in \mathcal{F}_\mathcal{P}$. The claim follows by contraposition.
\end{proof}

Therefore, under the condition $\mathcal{F}_\mathcal{E} \subseteq S_{\max}$, the probe acts as a detector of resources in the environment. Conversely, if this inclusion fails, harvesting does not in general provide reliable information about whether $\sigma$ is resourceful in its own resource theory. This Theorem is illustrated in Fig.~\ref{f.harvesting_max}.

\begin{figure*}[!ht]
    \centering
    \begin{tikzpicture}[>=stealth, thick]

% Left set
\draw[fill=blue!20] (-3,0) ellipse (2 and 2.5);
\node at (-3,-2.1) {$\mathcal{D}(\mathcal H_E)$};

%free set S
\draw[fill=red!20] (-3,0.7) ellipse (0.9 and 1);
\node at (-3,0) {$S_\text{max}$};

%free set in E
\draw[fill=red!40] (-3,0.9) ellipse (0.5 and 0.7);
\node at (-3,0.5) {$\mathcal{F_E}$};

\node[circle, fill=black, inner sep=2pt, label=left:{$\sigma$}] (e2) at (-3,-0.7) {};

% Right set
\draw[fill=Large_Ellipse_Prb] (3,0) ellipse (2 and 2.5);
\node at (3,-2.1) {$\mathcal{D}(\mathcal H_P)$};
%free set in P
\draw[fill=yellow!20] (3,1) ellipse (1.5 and 1.2);
\node at (2,0.5) {$\mathcal{F_P}$};

\node[circle, fill=black, inner sep=2pt] (p3) at (3,0.5) {};
\node[circle, fill=black, inner sep=2pt] (p4) at (4,-1.2) {};

% Pairwise arrows inside right set
\draw[->] (p3) -- (p4) node[pos=0.6, above, xshift=15pt, yshift=-2pt] {$\Phi^{(\sigma,V)}_\mathcal{P}$};

% Midpoints of arrows (targets)
\coordinate (m1) at (3,1.3);
\coordinate (m2) at (3.5,-0.5);

% Arrows from left points to arrows on the right
\draw[<-] (m2) to[out=-130, in=-20] (e2); % bends downward

\end{tikzpicture}
    \caption{Depiction of Theorem~\ref{t.FE_subset_S}. If there is non-zero resource in the initially free probe state after interaction, then the environment state is $\sigma \in \mathcal{D}\setminus S_{\max}$. Additionally, if $\mathcal{F_E}\subseteq S$, then the state of the environment is resourceful, i.e. $\sigma \notin \mathcal{F_E}$. Not seeing the resource in $\mathcal{P}$ does not imply the environment was in a free state, unless $\mathcal{F_E}=S_\text{max}$.}
    \label{f.harvesting_max}
\end{figure*}

A particularly transparent situation arises when $\mathcal{F}_\mathcal{E} = S_{\max}$, in which case the probe perfectly detects the presence of the resource:
\begin{equation}
r_\mathcal{P}\!\left(\Phi^{(\sigma,V)}_\mathcal{P}(\rho)\right) > 0
\quad \Leftrightarrow \quad
\sigma \notin \mathcal{F}_\mathcal{E}.
\end{equation}

The concept of induced free states depends on a chosen interaction $V$, while the set of free states $\mathcal{F}_\mathcal{E}$ is defined independently of probes and interactions. Therefore, it is expected that $S_{\max}$ can also be strictly smaller than $\mathcal{F}_\mathcal{E}$. This limits the applicability of harvesting as a witness. This limitation can sometimes be alleviated by restricting to a physically relevant subset of states of the environment $\mathcal{C} \subsetneq \mathcal{D(H_E)}$ (e.g., thermal states or states satisfying additional constraints), for which the inclusion $\mathcal{F}_\mathcal{E} \cap \mathcal{C} \subseteq S_{\max}$ may hold.

Finally, a more practical construction can be obtained by fixing the initial probe state. Given $\rho_0 \in \mathcal{F}_\mathcal{P}$, define
\begin{equation}
S_{\max}^{(\rho_0)} := \left\{ \sigma \in \mathcal{D}(\mathcal{H}_\mathcal{E}) ~\middle|~ 
\Phi^{(\sigma,V)}_\mathcal{P}(\rho_0) \in \mathcal{F}_\mathcal{P} \right\}.
\end{equation}
This set characterises those environmental states that appear free when probed with the specific input $\rho_0$, and can be used to construct experimentally accessible witness schemes under analogous inclusion conditions.

\section{Examples}
\label{sec:examples}
\subsection{Qubit--qubit controlled Clifford}

We begin with a simple setting in which both the probe and the environment are qubits, allowing for a full and explicit analysis. This example serves to illustrate how induced resource theories arise and how their structure depends on the interaction.

We take the probe resource to be the stabilizer Rényi entropy, with free states $\mathcal{F}_\mathcal{P} = \rm{STAB}_0$, i.e., the set of states with vanishing stabilizer Rényi entropy \cite{leone2022stabilizer} defined as $M_{\alpha}(\ket{\psi})=(1-\alpha)^{-1}\log\sum_{P\in\mathbb{P}_n}\Xi_P^\alpha(\ket{\psi})-n -S_\alpha (\psi)$ where $\Xi_P(\ket{\psi})=2^{-n} {\bra{\psi}P\ket{\psi}^2}$ is the Pauli spectrum for a $n$-qubit system  Pauli operators $P$ and $S_\alpha$ is the usual R\'enyi entropy. The free set $\rm{STAB}_0$ is the set of states that do not admit a purification into stabilizer states, and these are the on the orbit of the Clifford group through any computational basis state. For a single qubit, this corresponds to pure stabilizer states together with the maximally mixed state. A more general discussion of stabilizer Rényi entropy is included in App.~\ref{s.SRE}.

As interaction, we consider a controlled Clifford unitary
\begin{equation}
\label{e.cnot_clean}
V = \mathbb{I} \otimes \Pi_0 + R \otimes \Pi_1,
\end{equation}
where $R$ is a Clifford operator acting on the probe and $\Pi_k = \dyad{k}$ acts on the environment. The environment thus controls whether the probe undergoes the identity or a Clifford transformation.

For a generic environmental state $\sigma$ and $\rho \in \rm{STAB_0}$, the induced channel on the probe reads
\begin{equation}
\label{e.rho_CV_clean}
\Phi_\mathcal{P}^{(\sigma,V)}(\rho)
= \Tr[\Pi_0 \sigma]\, \rho
+ \Tr[\Pi_1 \sigma]\, R \rho R.
\end{equation}
That is, the probe undergoes a mixture of identity and a Clifford operation.

Even though both the identity and any Clifford unitary preserve stabilizer states, the  above map  does not for every $\sigma$. However, for $\sigma$  in $\rm{STAB}_0$, the map $\Phi_\mathcal{P}^{(\sigma,V)}$ preserves $\rm{STAB}_0$ so we see that we induced theory is  $S=\rm{STAB_0}$. In this example, the induced theory is the stabilizer resource theory on the first qubit, and the example reduces to a consistency check.

We now examine the induced channel on the environment,
\begin{equation}
\Phi_\mathcal{E}^{(\rho,V)}(\sigma)
= \Tr_\mathcal{P}\!\left[V(\rho \otimes \sigma)V^\dagger\right].
\end{equation}
One can observe that, from the same considerations done for $\Phi_{\cal P}^{(\sigma,V)}$, this channel also preserves the set ${\rm STAB_0}\subset \mathcal{D(H_E)}$ for all $\rho\in{\rm STAB_0}\subset \mathcal{D(H_P)}$, and hence the induced resource theory is strong. Thus, this example illustrates how the structure of the interaction determines the induced free set: here, the state of the environment effectively selects between free operations on the probe, resulting in an induced resource theory that is stable under repeated interactions.

We can further analyse this setting from the perspective of resource detection. From Eq.~\eqref{e.rho_CV_clean}, the probe remains in a free state for all $\rho \in \mathrm{STAB}_0$ if and only if the environmental state $\sigma$ induces an equally weighted combination of Clifford operations. In this example, this condition yields the maximal set of induced free states 
\begin{equation}
    S_\text{max} = \left\{\sigma \in \mathcal{D(H_E)} \mid \Tr[\Pi_0 \sigma] \in \{0,\half,1\} \right\}~.
\end{equation}
For the case of a measurement in the computational basis, this corresponds to states of the environment being on the poles and $XY$-plane of the Bloch sphere. This shows once again that the maximal set $S_{\rm max}$ is strongly influenced by the choice of the measurement basis, i.e. of the unitary $V$.

Since the environment is also a qubit, we may endow it with the same resource theory of non-stabilizerness, so that $\mathcal{F}_\mathcal{E} = \mathrm{STAB}_0$. One then finds that $\mathcal{F}_\mathcal{E} \subseteq S_{\max}$, which implies that any generation of resource in the probe certifies that the environmental state is resourceful. In this sense, the interaction $V$ enables harvesting in the probe to act as a witness of non-stabilizerness in the environment.

Notice that this behaviour depends crucially on the choice of interaction. For instance, if one replaces the Clifford operator $R$ with a non-Clifford unitary, such as $R = e^{i\pi X/4}$, the maximal induced set $S_\text{max}$ reduces to a much smaller set (i.e., only $\dyad{0}$). In this case, one has $S_\text{max} \subset \mathcal{F_E}=\rm{STAB_0}$, and the probe can acquire resource even when the environment is initially in a free state. Consequently, the harvesting protocol no longer provides a reliable witness of the resource in the environment. This is however quite natural, since we have chosen an interaction which is not free for the resource theory of stabilizer entropy defined on the combined two-qubit system.

\subsection{Control-flat interactions}

We now present an example where (i) we deal with general finite-dimensional systems and (ii) there is no previously defined resource theory on the environment. 
 Let the environment $\mathcal{E}$ have Hilbert space $\mathcal{H}_\mathcal{E}\cong\mathbb{C}^{d_\mathcal{E}}$, equipped with a resolution of the identity $\{\Pi_k\}_{k=0}^{n-1}$, with $n\le d_\mathcal{E}$ and $\sum_k \Pi_k=\mathbb{I}$. The probe $\mathcal{P}$ is a qudit with $\mathcal{H}_\mathcal{P}=\mathbb{C}^{d_\mathcal{P}}$, and we fix $\mathcal{F}_\mathcal{P}=\mathrm{STAB}_0$.

We consider the interaction
\begin{equation}
    V=\sum_{k=0}^{n-1} X^k \otimes \Pi_k,
\end{equation}
where $X=\sum_{j=0}^{d_\mathcal{P}-1}\dyad{j+1}{j}$ is the shift operator. This is a direct generalization of the controlled NOT gate.

Given a state $\sigma$ of the environment, the induced channel on the probe reads
\begin{equation}
    \Phi_\mathcal{P}^{(\sigma,V)}(\rho)
    = \sum_k \Tr[\Pi_k\sigma]\, X^k \rho X^{k\dagger}.
\end{equation}
Thus, the probe undergoes a convex mixture of shifts, with weights determined by the populations of $\sigma$ in the sectors defined by $\{\Pi_k\}$.

In view of the single qubit example, it is natural to consider the set of flat states:
\begin{equation}
    S=\left\{\sigma~\middle|~\Tr[\Pi_k\sigma]=\frac{1}{n}\ \forall k\right\}~,
\end{equation}
characterised by a flat entanglement spectrum. For $\sigma\in S$, the induced map is the uniform mixture of Clifford unitaries,
\begin{equation}
    \Phi_\mathcal{P}^{(\sigma,V)}(\rho)=\frac{1}{n}\sum_k X^k\rho X^{k\dagger},
\end{equation}
which preserves $\mathrm{STAB}_0$. This means that the interaction $V$ and $\mathrm{STAB}_0$ resource theory on the probe induces the set of flat states as  a weak  resource theory on the environment. Recently, the resource of non-flat states has been put forward and studied for its connection with other quantum resources \cite{PhysRevD.99.066012, Jasser2026} and this example shows how it can be induced by the stabilizer resource theory.  Moreover, a direct computation shows that $S$ is also invariant under the back-action on $\mathcal{E}$, so the induced resource theory is in fact strong.

More generally, the maximal induced free set is determined by the requirement that the above convex combination remains an equally weighted mixture of Clifford operations. This holds precisely when the non-zero weights are all equal. If $\sigma$ has support on $m$ sectors, this implies
\begin{equation}
    \Tr[\Pi_k\sigma]\in\{0,1/m\},
\end{equation}
and therefore
\begin{equation}
    S_{\max}=\left\{\sigma~\middle|~\Tr[\Pi_k\sigma]\in\{0,1/m\}\ \text{for some } m\right\}.
\end{equation}

From the harvesting perspective, this structure has a simple interpretation: resource generation in the probe occurs whenever $\sigma\notin S_{\max}$, i.e. whenever the weights across the sectors are not flat. Thus, the probe detects deviations from uniformity in the decomposition of $\sigma$ induced by $\{\Pi_k\}$.

If we endow $\mathcal{E}$ with the resource theory of non-stabilizerness, $\mathcal{F}_\mathcal{E}=\mathrm{STAB}_0$, the relation between $\mathcal{F}_\mathcal{E}$ and $S_{\max}$ depends on the choice of projectors. For instance, if $\{\Pi_k\}$ project onto computational basis states, then balanced mixtures of these basis states are stabilizer states, and one finds $\mathcal{F}_\mathcal{E}\subseteq S_{\max}$. In this regime, resource generation in the probe certifies non-stabilizerness in the environment, in agreement with the general witnessing mechanism discussed earlier.

\subsection{Qubit--oscillator E-CNOT} \label{s.QHO_ECNOT}

Let us now consider a qubit probe coupled to a quantum harmonic oscillator with Hamiltonian
\begin{equation}
    H=\omega\left(a^\dagger a + \tfrac{1}{2}\right),
\end{equation}
and eigenstates $\{\ket{n}\}$. The interaction is defined by an energy-threshold controlled unitary,
\begin{equation}
    V=\mathbb{I}\otimes \Pi_- + X\otimes \Pi_+,
\end{equation}
where
\begin{equation}
    \Pi_-=\sum_{n<\bar n}\dyad{n}, \qquad 
    \Pi_+=\sum_{n\ge \bar n}\dyad{n}.
\end{equation}
This implements a qubit flip conditioned on whether the oscillator energy lies above a threshold.

For an environment state $\sigma$, the induced channel on the probe is
\begin{equation}
    \Phi^{(\sigma,V)}_\mathcal{P}(\rho)
    = (1-p)\rho + p\, X\rho X~,
\end{equation}
with $p:=\Tr[\Pi_+\sigma]$. Thus, the interaction reduces to a bit-flip channel, fully determined by the weight $p$ of $\sigma$ above the energy threshold.

As in the previous examples, the probe dynamics preserves $\mathrm{STAB}_0$ whenever this mixture is Clifford-preserving, which occurs for
\begin{equation}
    p\in\{0,\tfrac{1}{2},1\}.
\end{equation}
This identifies the maximal induced free set
\begin{equation}
    S_{\max}=\left\{\sigma~\middle|~\Tr[\Pi_+\sigma]\in\{0,\tfrac{1}{2},1\}\right\}~,
\end{equation}
and, as before, one verifies that it is invariant under the back-action on $\mathcal{E}$, so the induced resource theory is strong.

To analyse harvesting, it is natural in this setting to restrict the environment to thermal states,
\begin{equation}
    \sigma_\beta=\frac{e^{-\beta H}}{Z}, \qquad 
    Z=\Tr(e^{-\beta H}),
\end{equation}
which introduces a physically motivated constraint on accessible states.

In this case, the control parameter becomes
\begin{equation}
    p(\beta)=\Tr(\Pi_+\sigma_\beta)
    = \frac{\sum_{n\ge \bar n} e^{-\beta \omega n}}{\sum_{m} e^{-\beta \omega m}}
    = e^{-\beta \bar n \omega}.
\end{equation}
Hence, the induced free condition $p\in\{0,\tfrac{1}{2},1\}$ translates into discrete values of the temperature,
\begin{equation}
    \beta\in\left\{0,\frac{\ln 2}{\bar n \omega},\infty\right\}.
\end{equation}

For a free input state $\rho=\dyad{0}$, the output is diagonal,
\begin{equation}
    \Phi^{(\sigma_\beta,V)}_\mathcal{P}(\rho)
    = \mathrm{diag}(1-p(\beta),\,p(\beta)),
\end{equation}
so the resource generated in the probe is entirely controlled by $p(\beta)$. In particular, resource generation occurs whenever $p(\beta)\notin\{0,\tfrac{1}{2},1\}$.

To interpret this, we equip the oscillator with a simplified resource theory of non-Gaussianity, where free states are Gaussian orbits of the vacuum together with the infinite-temperature state. Under the thermal constraint, this reduces to
\begin{equation}
    \mathcal{F}_\mathcal{E}^\beta=\{\dyad{0},\,\beta_\infty\}.
\end{equation}
In this regime one finds $\mathcal{F}_\mathcal{E}^\beta \subseteq S_{\max}$, so resource generation in the probe certifies non-Gaussianity in the environment, in line with the general witnessing mechanism.

However, the converse fails: there exists a finite temperature,
\begin{equation}
    \beta=\frac{\ln 2}{\bar n \omega},
\end{equation}
for which $p=\tfrac{1}{2}$ and no resource is generated in the probe, despite the environment being resourceful. Operationally, this corresponds to a perfectly balanced mixture of the two control branches, which washes out any detectable signature at the level of the probe.

\begin{figure}
    \centering
        \includegraphics[width=\columnwidth]{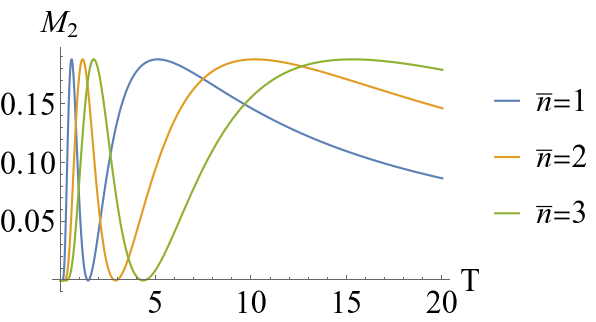}
        \caption{The SRE as a function of temperature. Here we set $\omega=1$ and plot different values for the cutoff $\bar{n}$.}
        \label{f.SRE_temp}
\end{figure}

\subsection{Qubit--oscillator Jaynes--Cummings}

We now consider a physically motivated interaction, namely the Jaynes--Cummings model describing a two-level atom coupled to a single bosonic mode. The probe $\mathcal{P}$ is the atom, while the field plays the role of the environment $\mathcal{E}$. The Hamiltonian is
\begin{equation}
    H = \Omega \dyad{1} + \omega a^\dagger a 
    + g(\dyad{0}{1}\otimes a^\dagger + \dyad{1}{0}\otimes a),
\end{equation}
and we work on resonance, $\Omega=\omega$, in the interaction picture.

We restrict the environment to states diagonal in the Fock basis,
\begin{equation}
    \sigma=\sum_n p_n \dyad{n},
\end{equation}
which is a natural constraint in many physical situations.

The unitary evolution $V_t=e^{-iH_{\mathrm{int}}t}$ induces a channel on the probe of the form
\begin{equation}
    \Phi^{(\sigma,V_t)}_\mathcal{P}(\rho)
    = \sum_{n} p_n \, \Phi_n^{(t)}(\rho),
\end{equation}
where each $\Phi_n^{(t)}$ corresponds to a Rabi oscillation with frequency $\sqrt{n}$ or $\sqrt{n+1}$. Explicitly, the Kraus operators are
\begin{align}
    K_{n1} &= \sqrt{p_n}
    \begin{bmatrix}
        \cos(gt\sqrt{n}) & 0 \\
        0 & \cos(gt\sqrt{n+1})
    \end{bmatrix}, \\
    K_{n2} &= \sqrt{p_n}
    \begin{bmatrix}
        0 & \sin(gt\sqrt{n+1}) \\
        \sin(gt\sqrt{n}) & 0
    \end{bmatrix}.
\end{align}

Unlike the previous engineered examples, the induced dynamics is now a superposition of oscillations at incommensurate frequencies. As a consequence, the set of environment states that preserve $\mathrm{STAB}_0$ is extremely restricted.

If one selects $S=\{\dyad{0}\}$, the channel reduces to a single Rabi oscillation. In this case, the probe dynamics is free only at discrete times
\begin{equation}
    t=\frac{m\pi}{2g}, \qquad m\in\mathbb{Z},
\end{equation}
where it becomes either the identity channel (even $m$) or an erasure map to $\dyad{0}$ (odd $m$). Both are free operations on the probe, so this choice defines a weak induced resource theory. It is strong only when the field is left invariant, which occurs for even $m$.

More generally, one finds that the maximal induced free set is $S_{\max}=\{\dyad{0}\}$, for these special times, and is empty for generic $t$. In other words, almost any interaction with the field generates a resource in the probe.

From the harvesting perspective, this has a clear implication: the protocol is not informative about the environment. Since $S_{\max}$ is either trivial or empty, one typically has $S_{\max} \subset\mathcal{F}_\mathcal{E}$, and resource generation in the probe does not certify the resourcefulness of the field. While this specific example implies an inability to learn about the environment state by inspecting if there is non-zero SRE in the probe post-interaction, it still signals that the natural interaction between a bosonic mode and an atom generates SRE in the state of the atom.

Operationally, what remains accessible is not a binary witness but a quantitative one: the amount and timescale of resource generation depend on the distribution $\{p_n\}$, and thus encode properties of the environment in a model-dependent way.

This limitation is specific to the choice of resource. If, instead, one considers the resource theory of coherence (with free states diagonal in the energy basis), the situation improves. For any diagonal $\sigma$ and any time $t$, the induced channel does not generate coherence in the probe. Hence, denoting by $S_{\max}^{\mathrm{coh}}$ the corresponding induced free set, one has $\mathcal{F}_\mathcal{E}^{\mathrm{coh}} \subseteq S_{\max}^{\mathrm{coh}}$, so coherence generation in the probe provides a faithful witness of coherence in the environment.

This example highlights a key point: harvesting is often possible for natural interactions, but its utility as a tool to learn information about the state of the environment crucially depends on the compatibility between the interaction and the chosen resource.

\subsection{Qubit--field with UDW interaction} \label{s.UDW_example}

Let us now discuss the interaction between a discrete quantum system $\mathcal{P}$ and a relativistic quantum field $\mathcal{E}$. Specifically, we consider $\mathcal{P}$ to be an Unruh--DeWitt detector with two energy levels separated by a gap $\Omega$, interacting for a finite time with a (1+1) dimensional massless scalar field. The interaction-picture time-evolution operator is
\begin{equation}
U = \mathcal{T} \left\{ \exp \left[-i \lambda\int\de^{D} x \, \Lambda(x) \, \mu(\tau) \otimes \phi(x) \right] \right\},
\end{equation}
where $\lambda \ll 1$ is the coupling strength, $\Lambda(x)$ is a spacetime smearing function, and the monopole operator is
\begin{equation}
\mu(\tau)=e^{-i\Omega\tau}\dyad{1}{0}+e^{+i\Omega\tau}\dyad{0}{1}.
\end{equation}
The scalar field operator is
\begin{equation}
\phi(x) = \int\frac{\de^d \mathbf{k}}{(2\pi)^d \sqrt{2|\mathbf{k}|}}\left( a_\mathbf{k} e^{-i k \cdot x} + a_\mathbf{k}^\dag e^{+i k \cdot x} \right),
\end{equation}
with $d=D-1$.

Expanding perturbatively up to second order in $\lambda$, the induced channel on the probe reads
\begin{widetext}
\ba
\nonumber
\Phi^{(\sigma, V)}\mathcal{P}(\rho) &=& \rho - i \lambda \int \de^D x ,\Lambda(x) [\mu(\tau)\rho - \rho \mu(\tau)] \braket{\phi(x)}_\sigma \\
\nonumber
&& - \frac{\lambda^2}{2} \int \de^D x \de^D x' \Lambda(x) \Lambda(x') \mathcal{T}\left[\mu(\tau) \mu(\tau') \rho + \rho \mu(\tau) \mu(\tau') \right] \braket{\mathcal{T}{\phi(x) \phi(x')}}_\sigma \\ \label{e.QFT_map}
&&+ \lambda^2 \int \de^D x \de^D x' \Lambda(x) \Lambda(x') \mu(\tau) \rho \mu(\tau') \braket{\phi(x') \phi(x)}_\sigma + \mathcal{O}(\lambda^3).
\label{e.probe_field_final_state_text}
\ea
\end{widetext}

We begin by analysing induced resources. Taking $\mathcal{F}_\mathcal{P} = \rm{STAB}_0$, one finds that perturbation theory generically maps stabilizer states to non-stabilizer states, as the free states $\rm{STAB}_0$ are not path-connected. The induced set $S$ is therefore empty. To obtain a non-trivial structure, we instead enlarge the free set to the convex stabilizer polytope
\begin{equation}
\rho = \tfrac{1}{2} (\mathbb{I} + \mathbf{r}\cdot \mathbf{X}), \quad |\mathbf{r}|_1 \leq 1,
\end{equation}
where $\abs{~\cdot~}_1$ denotes the 1-norm and $\mathbf{X}:= [X~ Y~ Z]^T$. We call this set of states $\rm{STAB}_\text{con}$. Even then, the map typically drives states outside this set. For instance, rewriting $\mu(\tau) = \cos(\Omega \tau) X + \sin(\Omega \tau) Y$, one finds contributions in the second line of Eq.~\eqref{e.probe_field_final_state_text} proportional to $\{Z,\rho\}$, which moves the state along the Bloch sphere in a way that generally exits the stabilizer polytope.

\begin{figure*}
     \centering
     \begin{subfigure}[b]{0.48\textwidth}
         \centering
         \includegraphics[width=\textwidth]{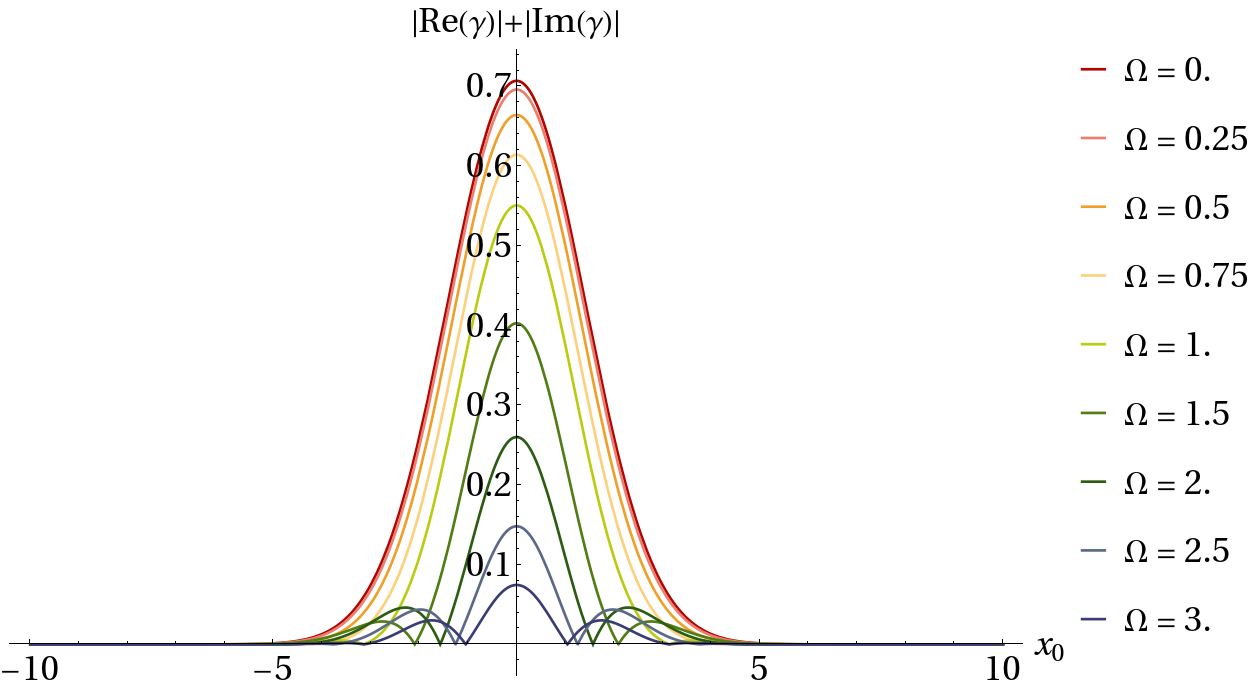}
         \caption{}
     \end{subfigure}
     \hfill
     \begin{subfigure}[b]{0.48\textwidth}
         \centering
         \includegraphics[width=\textwidth]{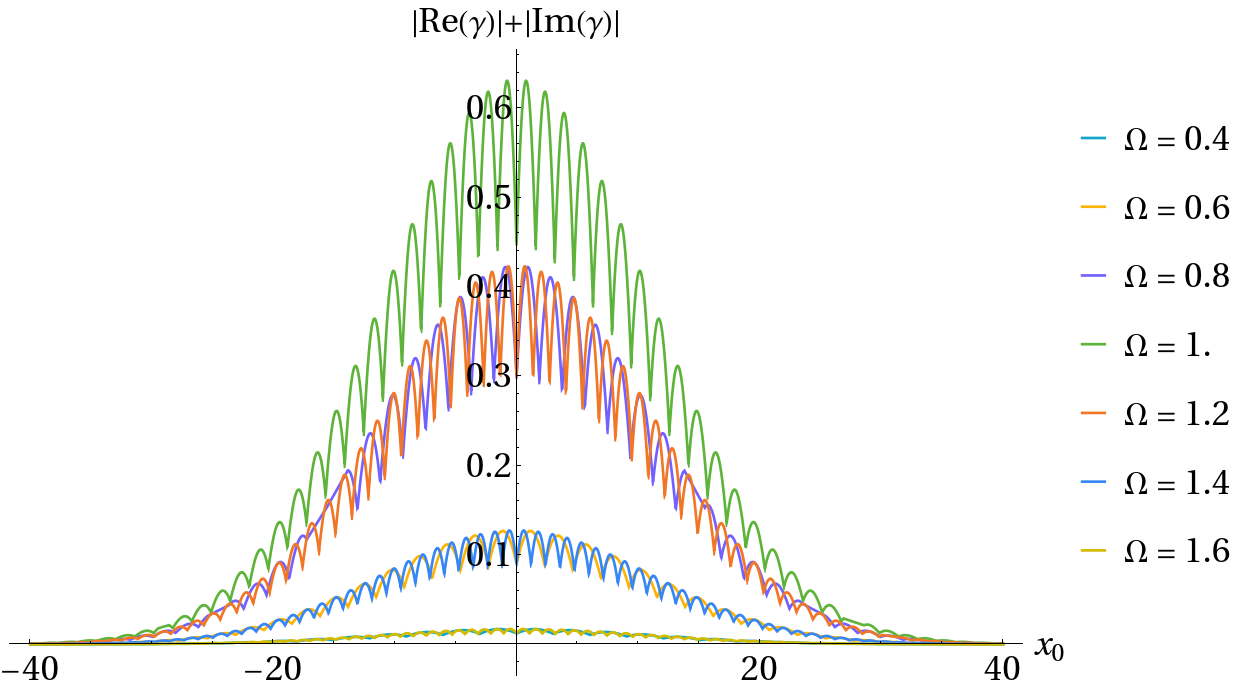}
         \caption{} \label{f.gamma_coh_state_b}
     \end{subfigure}
    \caption{The 1-norm of the coherence $\abs{\Re(\gamma)} + \abs{\Im(\gamma)}$ as a function of the localization peak of the coherent state $x_0$. For \textbf{(a)} $T=2, \sigma_0=2$ and $k_0=0$, while for \textbf{(b)} $T=10$, $\sigma_0=5$ and $k_0 = 1$. In \textbf{(b)}, one can see the resonance effect, as setting $\Omega=1$ yields the largest effect on the detector state.}
    \label{f.gamma_coh_state}
\end{figure*}

Thus, the interaction with a quantum field generically creates magic regardless of the initial field state. As in the Jaynes--Cummings case, this indicates that the induced resource theory reflects properties of the interaction itself rather than the resource content of the environment.

We now turn to resource harvesting. Fixing the initial probe state to $\rho_0=\dyad{0}$, the evolved state takes the form
\begin{equation}
\Phi^{(\sigma, V)}_\mathcal{P}(\dyad{0}) =
\begin{bmatrix}
1 - \lambda^2 q & \lambda \gamma \\
\lambda \gamma^* & \lambda^2 q
\end{bmatrix}
+ \mathcal{O}(\lambda^3),
\end{equation}
with
\begin{equation}
    \begin{split}
        q &= \int \de^D x \de^D x' \Lambda(x)\Lambda(x') e^{i\Omega(\tau-\tau')}\braket{\phi(x') \phi(x)}_\sigma, \\
\gamma &= i \int \de^D x \Lambda(x) e^{-i\Omega \tau} \braket{\phi(x)}_\sigma.
    \end{split}
    \label{e.gamma_ex_e}
\end{equation}
Requiring the state to remain inside $\rm{STAB}_\text{con}$ leads to the condition
\begin{equation}
|\Re \gamma| + |\Im \gamma| \leq \lambda q.
\end{equation}
This effectively enforces $\gamma \ll 1$, yielding the maximal induced free set
\begin{equation}
S_\text{max}^{(\dyad{0})} = \left\{ \sigma ~~\middle|~~ |\Re \gamma| + |\Im \gamma| \leq \lambda q \right\}.
\end{equation}
This set depends sensitively on the interaction profile $\Lambda(x)$.

If we define a resource theory on the field where free states satisfy $\braket{\phi(x)}_\sigma = 0$, then $\gamma=0$ and the inequality is always satisfied. Hence $\mathcal{F}_\mathcal{E} \subseteq S_\text{max}^{(\dyad{0})}$, so magic generation in the detector witnesses a non-vanishing one-point function in the field. Since the one-point function acts as an order parameter in symmetry breaking~\cite{Weinberg1995, peskin2018introduction, Sachdev1999}, this provides a direct operational probe of phase transitions.

To make this explicit, consider the field in a coherent state
\begin{equation}
\ket{\alpha} = \exp\left\{ \int \de^d \mathbf{k} \left[ \alpha(\mathbf{k}) a_\mathbf{k}^\dagger - \alpha^*(\mathbf{k}) a_\mathbf{k} \right] \right\} \ket{0},
\end{equation}
which has a non-zero one-point function. We use this to calculate the expression for $\gamma$. The details of the calculations together with motivation for the specific choice of state is presented in App.~\ref{a.alpha}. As depicted in Fig.~\ref{f.gamma_coh_state}, for Gaussian-localised wave packets, one finds that $\gamma$ increases when the field excitation overlaps with the detector region, and exhibits a resonance when the detector gap matches the characteristic momentum of the excitation.

This illustrates two key features: first, detectability depends on spacetime localization, not just on the presence of the resource; second, coherent states have no direct conceptual connection to the resource of magic, but they nonetheless induce magic in the detector. This second point highlights that induced resource theories classify how environmental states affect the probe and do not necessarily inherit the physical interpretation of the probe’s resource theory, unless specific settings are engineered or found in nature.

\section{Conclusions and outlook}
\label{sec:conclusions}
We examined the concept of quantum resource harvesting, intending to clarify under which conditions one is allowed to claim that harvesting of some resource in the probe implies the presence of some resource in the field.

Specifically, we have considered situations in which either the resource theory is not defined over the quantum system from which it is intended to be harvested, or physical constraints require the use of a non-free operation to perform the harvesting.

Much more generally, we extended this framework to settings where a well-understood probe system interacts with an environment, and we asked what we can learn about the resources of the environment when harvesting occurs. This question is particularly relevant when the interaction is physically motivated.

This has led us to introduce the concept of induced resource theory, namely, a resource theory for the environment defined in terms of the probe's resource theory. Moreover, depending on the nature of the interaction between the probe and the environment and on the properties of the quantum channel induced on the environment by the interaction with the probe, we have introduced two distinct types of induced resource theory: weak and strong. Our work not only clarifies the concept of quantum resource harvesting in non-standard settings but also paves the way for the investigation of harvesting protocols in non-ideal scenarios. 

The present framework points to several natural directions for further investigation. First, it could be generalised to a broad class of quantum resource theories, or combined with the approach of Ref.~\cite{DenerisEtAl26}, in which several resources are treated simultaneously.
Second, our results could be related to scenarios involving resource exchange between a probe and its environment, such as the trading between non-stabilizerness and athermality studied in Ref.~\cite{junior2025trading}. Furthermore, beyond detecting merely the presence of a resource in the probe, an important next step would be its quantitative characterisation, leading to the construction of induced resource monotones. A particularly promising direction is to explore connections between magic in the probe and one-point functions in quantum fields or spin chains. Establishing such a connection could provide a new approach to detecting and characterising phase transitions in these systems.

\section{Acknowledgements}
RN, NP, and EKV thank S. Nadal-Gisbert, S. Hirpara, S. Salomaa, and O. Veltheim for discussions and comments. RN acknowledges the financial support of the Vilho, Yrjö and Kalle Väisälä Foundation. NP acknowledges financial support from the Academy of Finland via the Centre of Excellence program (Project No. 336810 and Project No. 336814) during the initial stage of this work. NP and EKV acknowledge the financial support of the Research Council of Finland through the Finnish Quantum Flagship project (358878, UH). EKV is in part supported by the Research Council of Finland grant 1371600. A.H. and S.C. acknowledge support from  PNRR MUR project PE0000023-NQSTI and A.H. the PNRR MUR project CN 00000013-ICSC.

\bibliography{bib}

\onecolumngrid
\appendix

\section{The Stabilizer Rényi Entropy} \label{s.SRE}
Stabilizer R\'enyi Entropy ~\cite{Leone_2022} is a measure of how much the state of a quantum system is ``spread'' over the Pauli basis. For pure states, it is the {\em unique} computable monotone \cite{Leone_2024} for non-Clifford resources. 
Consider the probability distribution defined by:
\ba
\Xi_P(\ket{\psi})=\frac{\bra{\psi}P\ket{\psi}^2}{d},\;P\in\mathbb{P}_n,
\ea
where $d=2^n$ is the dimension of the Hilbert space.
The $\alpha$-SRE $M_\alpha(\ket{\psi})$ of an n qubits pure state $\ket{\psi}$ is defined as the $\alpha$ Rényi entropy of $\Xi_P$:
\ba
M_{\alpha}(\ket{\psi})=(1-\alpha)^{-1}\log\sum_{P\in\mathbb{P}_n}\Xi_P^\alpha(\ket{\psi})-\log d
\ea
In particular, the $2$-SRE $M_2$ for pure states can be written as:
\ba
\nonumber
M_2(\ket{\psi})&=& -\log\left[d\sum_{P\in \mathbb{P}_n} \Xi_P(\psi)^2\right]\\
\nonumber
&=&-\log \left( d^{-1} \Tr^4 (\psi P) \right)\\
&=&-\log d\Tr{Q\dyad{\psi}^{\otimes4}}
\ea
where $Q:=d^{-2}\sum_{P\in\pauli{n}}P^{\otimes4}$.

The 2-SRE can be extended to generic mixed states via:
\ba
\label{e.sre_mixed}
\tilde{M}_2(\rho)=M_2(\rho)-S_2(\rho),
\ea
where $M_2(\rho)=-\log d\Tr{Q\rho^{\ot4}}$ and $S_2(\rho)=-\log\Tr{\rho^2}$ is the 2-R\'enyi entropy of $\rho$.

The {\em non-local} non-stabilizerness is defined \cite{cao2024gravitationalbackreactionmagical} as $M^{NL} (\phi):= \min_{R=U_A\otimes U_B} M (R (\phi))$ for any non-stabilizer monotone $M$. It represents the non-stabilizer resources that cannot be erased by local unitary operations.

SRE can be generalised to qudit.
A generalization of the Pauli group is given by the Weyl-Heisenberg group \cite{Gross_2006}
\ba
\tilde{\mathcal{D}}^{(l)}_1=\left<\{D_{(p,q)}\}\right>=\{\tau^aD_{(p,q)}\}\quad \text{with}\quad a,p,q\in Z_l
\ea
i.e. the group generated by the $l^2$ operators defined as
\ba \label{whgen}
D_{(p,q)}=\tau^{-pq}Z^{p}X^{q} \quad \text{with}\quad \tau=e^{\frac{i}{l}\pi} \quad p,q\in Z_d 
\ea
and 
\ba
X\ket{j}=\ket{j\oplus_l 1} \quad
Z\ket{j}=\omega^j\ket{j} \quad \omega=e^{\frac{2\pi i}{l}}\,. \ea
The construction of Weyl-Heisenberg operators generalises straightforwardly for $n$ qudit systems via tensor product:
\ba
\tilde{\mathcal{D}}_n ^{(l)}=\tilde{\mathcal{D}}_1 ^{(l)}\otimes\underbrace{\dots}_n\otimes\tilde{\mathcal{D}}_1 ^{(l)} 
\ea

The SRE  have an analogous form when the Pauli decomposition is substituted with the Weyl-Heisenberg decomposition \cite{Wang2023}:
\ba 
\Xi_{D_{(p,q)}}(\ket{\psi}) & = & d^{-1}|{\bra{\psi}D_{(p,q)}\ket{\psi}}|^2 \\
M_{\alpha}(\ket{\psi}) & = & (1-\alpha)^{-1}\log \sum_{D_{(p,q)}\in \tilde{\mathcal{D}}_n} \Xi^{\alpha}_{D_{(p,q)}}(\ket{\psi})-\log d
\ea 

Most importantly, the analogy between qubits and qudits holds also in the case of mixed states, that is:
\ba
\label{e.d_SRE_mixed}
M_2(\rho)=-\log d^{-1}\sum_{p,q}\left|\Tr\left[D_{(p,q)}\rho\right]\right|^4-S_2(\rho)
\ea

\section{Details for Example \ref{s.UDW_example}} \label{a.alpha}
In this appendix, we show explicitly how to evaluate the $\gamma$ in Eq.~\eqref{e.gamma_ex_e} for a coherent state:
\begin{equation}\label{e.coh_state}
    \ket{\alpha} = \exp\left\{ \int \de^d \mathbf{k} \left[ \alpha(\mathbf{k}) a_\mathbf{k}^\dagger - \alpha^*(\mathbf{k}) a_\mathbf{k} \right] \right\} \ket{0}~,
\end{equation}
where $\alpha:\mathbb{R}^d \to \mathbb{C}$ denotes the momentum profile of the continuous multimode coherent state parameter. Coherent states physically correspond to field configurations with a Poisson distribution of particle numbers and are eigenstates of the annihilation operator $a_\mathbf{k}\ket{\alpha} = \alpha(\mathbf{k}) \ket{\alpha}$. The resulting one-point function is given by
\begin{equation}
    \braket{\phi(x)}_\alpha = \int \frac{\de^d \mathbf{k}}{(2\pi)^d \sqrt{2 \omega_\mathbf{k}}} \left[ \alpha(\mathbf{k})e^{-i k \cdot x} + \text{h.c} \right]~.
\end{equation}
Choosing a convenient spacelike foliation and splitting the spacetime smearing of the detector into its temporal and spatial part $\Lambda(x) = \chi(\tau) F(\mathbf{x})$, we obtain the expression
\begin{equation} \label{e.gamma_integral}
\begin{split}
    \gamma = i \int & \frac{ \de \tau \de^d \mathbf{k} \chi(\tau) }{(2\pi)^d \sqrt{2\omega_\mathbf{k}}} \bigg[ \alpha(\mathbf{k})\Tilde{F}_+(\mathbf{k}) e^{-i (\Omega + \omega_\mathbf{k}) \tau} + \alpha^*(\mathbf{k}) \Tilde{F}_-(\mathbf{k}) e^{-i ( \Omega - \omega_\mathbf{k}) \tau} \bigg]~,
\end{split}
\end{equation}
with
\begin{equation}
    \Tilde{F}_\pm (\mathbf{k}) := \int \de^d \mathbf{x} F(\mathbf{x}) e^{\pm i \mathbf{k}\cdot \mathbf{x}}~.
\end{equation}
Notably, the above expression for $\gamma$ depends on the overlap between the localization of the coherent state excitations and the localization of the detector--field interaction in momentum space.

For evaluation of the above, we pick a simple set of Gaussian localisations in a (1+1) dimensional spacetime and fix the scalar field to be massless $\omega_k = \abs{k}$. Namely, we set
\begin{align}
    \chi(\tau) =& e^{-\tau^2/2T^2} \\
    F(x) =& \delta(x) \quad \Rightarrow \quad \Tilde{F}_\pm (k) = 1 \\
    \alpha(k) =& \half \sqrt{\frac{\omega_k}{2}} \bigg[ e^{-\half \sigma_0^2 (k-k_0)^2}e^{-ikx_0} + e^{-\half \sigma_0^2 (k+k_0)^2}e^{+ikx_0} \bigg]~.
\end{align}
These choices correspond to a point-like detector localised at $x=0$ and a coherent state localised as a Gaussian wavepacket with peak intensity at $x=x_0$ for time $t=0$ and smeared with variance $\sigma_0^2$. The expression for $\alpha$ is motivated below. Specifically, we show it corresponds to a classical field configuration of a Gaussian packet moving in both directions with mean momentum $k_0$. 

The desired configuration of a (real) massless classical field has the following initial conditions ($t=0$):
\begin{align}\label{e.app_initial_conditions}
    \varphi_\text{cl}(x) &= \frac{1}{\sqrt{2\pi \sigma^2}} e^{-(x-x_0)^2/2\sigma_0^2}\cos(k_0 x) \\
    \pi_\text{cl}(x) &= 0~.
\end{align}
We could also include a constant amplitude for the field, modifying the expectation value of the excitations, but we set this to 1 for simplicity, as it does not change the behaviour of $\gamma$. This field configuration corresponds to a Gaussian wave packet with its peak at $x=x_0$, and moves in both directions as seen by its Fourier transform:
\begin{equation} \label{e.class_field_decomp}
    \Tilde{\varphi}_\text{cl}(k) = \half e^{-\half \sigma_0^2 (k-k_0)^2}e^{-ikx_0} + \half e^{-\half \sigma_0^2 (k+k_0)^2}e^{+ikx_0} \\
\end{equation}
The one-point functions of a coherent state describe its semi-classical field phase-space configuration as:
\begin{align} 
    \braket{\phi(t=0,x)}_\alpha = \varphi_\text{cl}(x) = \int \frac{\de k}{2\pi} \Tilde{\varphi}_\text{cl}(k) e^{ikx} \\
   \braket{\pi(t=0,x)}_\alpha = \pi_\text{cl}(x) = \int \frac{\de k}{2\pi} \Tilde{\pi}_\text{cl}(k) e^{ikx}~.
\end{align}
Noting that $a_k \ket{\alpha} = \alpha(k) \ket{\alpha}$, we equivalently obtain
\begin{align}
    \braket{\phi(t=0,x)}_\alpha &= \int \frac{\de k}{2\pi \sqrt{2\omega_k}} \left[ \alpha(k) + \alpha^*(-k) \right] e^{ikx} \\
    \braket{\pi(t=0,x)}_\alpha &= -i \int \frac{\de k}{2\pi \sqrt{2\omega_k}} \omega_k \left[ \alpha(k) - \alpha^*(-k) \right]e^{ikx} ~.
\end{align}
For consistency, we thus require
\begin{equation}
    \alpha(k) = \sqrt{\frac{\omega_k}{2}} \left( \Tilde{\varphi}_\text{cl} + \frac{i}{\omega_k} \Tilde{\pi}_\text{cl} \right)~.
\end{equation}
Inserting the initial conditions~\eqref{e.app_initial_conditions}, we obtain:
\begin{equation}
    \alpha(k) = \half \sqrt{\frac{\omega_k}{2}} \left[ e^{-\half \sigma_0^2 (k-k_0)^2}e^{-ikx_0} + e^{-\half \sigma_0^2 (k+k_0)^2}e^{+ikx_0} \right]~.
\end{equation}

For these particular choice of $\alpha$ and choice of smearing, we may evaluate the integral in the expression~\eqref{e.gamma_integral} numerically and obtain the results presented in Fig.~\ref{f.gamma_coh_state}.

\end{document}